\documentclass[11pt]{article}%
\usepackage{amsmath}
\usepackage{amssymb}
\usepackage{amsfonts}

\usepackage{cite}
\usepackage{graphicx}
\usepackage{float}
\usepackage{longtable}
\usepackage{hyperref}

\usepackage{algorithmicx}
\usepackage[ruled,vlined]{algorithm2e}
\SetAlgoCaptionSeparator{ }

\usepackage{fullpage}

\newcommand{\tref}[1]{Table~\ref{#1}}
\newcommand{\sref}[1]{Section~\ref{#1}}

\newenvironment{algo}[1][!htbp]
  {
   \begin{algorithm}[#1]%
  }{\end{algorithm}}

\providecommand{\U}[1]{\protect\rule{.1in}{.1in}}
%EndMSIPreambleData
\newtheorem{theorem}{Theorem}

\newtheorem{definition}{Definition}

\newtheorem{lemma}{Lemma}

\newtheorem{problem}{Problem}

\newenvironment{proof}[1][Proof]{\textbf{#1.} }{\ \rule{0.5em}{0.5em}}

\begin{document}
\title{Quantum State Optimization and Computational Pathway Evaluation for Gate-Model Quantum Computers}
\author{Laszlo Gyongyosi\thanks{School of Electronics and Computer Science, University of Southampton, Southampton SO17 1BJ, U.K., and Department of Networked Systems and Services, Budapest University of Technology and Economics, 1117 Budapest, Hungary, and MTA-BME Information Systems Research Group, Hungarian Academy of Sciences, 1051 Budapest, Hungary.}
}
\date{}

\maketitle
\begin{abstract}
A computational problem fed into a gate-model quantum computer identifies an objective function with a particular computational pathway (objective function connectivity). The solution of the computational problem involves identifying a target objective function value that is the subject to be reached. A bottleneck in a gate-model quantum computer is the requirement of several rounds of quantum state preparations, high-cost run sequences, and multiple rounds of measurements to determine a target (optimal) state of the quantum computer that achieves the target objective function value. Here, we define a method for optimal quantum state determination and computational path evaluation for gate-model quantum computers. We prove a state determination method that finds a target system state for a quantum computer at a given target objective function value. The computational pathway evaluation procedure sets the connectivity of the objective function in the target system state on a fixed hardware architecture of the quantum computer. The proposed solution evolves the target system state without requiring the preparation of intermediate states between the initial and target states of the quantum computer. Our method avoids high-cost system state preparations and expensive running procedures and measurement apparatuses in gate-model quantum computers. The results are convenient for gate-model quantum computations and the near-term quantum devices of the quantum Internet.
\end{abstract}

\section{Introduction}
\label{sec1}
Quantum computers \cite{qcomputer,a1,a2,a4,refibm,qcadd1,qcadd2,qcadd3,ref9,ref10} utilize the fundamentals of quantum mechanics to perform computations \cite{ref1,ref2,ref3,ref4,ref5,ref6,ref7,ref8,refa1}. For experimental gate-model quantum computer architectures and the near-term quantum devices of the quantum Internet \cite{puj1,puj2,refn7,ref21,qn1,qn2,qn3,qn4,refp5,refp6,refp7,nadd62,nadd62b,kris1,kris2,ca1,ca2,ca3,ca4,ca5,refn5,refn3,sat,telep,refn1,refn2,refn4,refn6,refqirg,van1,van2,van12,net2,net3,net4,net5,net6,net7,net8,net9,net10}, gate-based architectures provide an implementable solution to realize quantum computations \cite{ref9,ref9b,ref10,ref11,ref12,ref13,ref14,ref15,ref16,ref17,ref18,ref19,ref20,ref21,ref22, a1,a2,a3,a4,a5,a6,a7,a8,a9,a10,refp1,refp2,refp4,ibm2,petz,shor}. In a gate-model quantum computer the operations are realized via a sequence of quantum gates, and each quantum gate represents a unitary transformation \cite{ref10,ref11,ref12,ref13,ref14,ref15,ref16,ref17,ref18,ref19,ref20,ref21,ref22,ref23,ref24,uj1,uj2,uj3,uj4}. The input of a quantum computer is a quantum system realized via several quantum states, and the unitaries of the quantum computer change the initial system state into a specific state \cite{ref9,ref10,ref11,ref12}. The output quantum system is then measured by a measurement array. 

A computational problem fed into a quantum computer defines an objective function with a particular connectivity (computational pathway) \cite{ref10}. The solution of this computational problem in the quantum computer involves identifying an objective function with a target value that is subject to be reached. To achieve the target objective function value, the quantum computer must reach a particular system state such that the gate parameters of the unitary operations satisfy the target value. These optimal gate parameter values of the unitary operations of the quantum computer identify the optimal state of the quantum computer. This optimal system state is referred to as the target system state of the quantum computer. Finding the target system state involves multiple measurement rounds and iterations, with high-cost system state preparations\footnote{Note, the term "quantum state preparation" in the current context refers to a quantum state determination method. It is because the aim of the proposed procedure is the determination of an optimal state of the quantum computer, i.e., the optimal values of the gate-parameters of the unitaries of the quantum computer, see also \cite{ref10}.}, quantum computations, and measurement procedures. Therefore, optimizing the determination procedure of the target system state is essential for gate-model quantum computers.

Here, we define a method for state determination and computational path evaluation for gate-model quantum computers. The aim of state determination is to find a target system state for a quantum computer such that the pre-determined target objective function value is reached. The aim of the computational path evaluation is to find the connectivity of the objective function in the target system state on the fixed hardware architecture \cite{ref10} of the quantum computer. To resolve these issues, we define a framework that utilizes the theory of kernel methods \cite{ref25,ref26,ref27,ref28,ref29,ref30,ref31,ref32,ref33,ref34,ref35} and high-dimensional Hilbert spaces. In traditional theoretical computer science, kernel methods represent a useful and low computational-cost tool in statistical learning, signal processing theory and machine learning. We prove that these methods can also be utilized in gate-model quantum computations for particular problems.

The novel contributions of our manuscript are as follows:
\begin{enumerate}
\item \textit{We define a method for optimal quantum state determination and computational path evaluation for near-term quantum computers.} 

\item \textit{The proposed state determination method finds a target system state for a quantum computer at a given target objective function value.}

\item \textit{The computational pathway evaluation finds the connectivity of the objective function in the target system state on the fixed hardware architecture of the quantum computer.} 

\item \textit{The proposed solution evolves the target system state of the quantum computer without requiring the preparation of intermediate system states between the initial and target states of the quantum computer.}

\item \textit{The method avoids high-cost system state preparations, expensive running procedures and measurement rounds in gate-model quantum computers.}

\item \textit{The results are useful for gate-model quantum computers and the near-term quantum devices of the quantum Internet}.
\end{enumerate}

This paper is organized as follows. In \sref{relw}, related works are summarized. \sref{sec2} presents the problem statement. \sref{sec3} discusses the results. Finally, \sref{sec4} concludes the paper. Supplemental information is included in the Appendix.

\section{Related Works} 
\label{relw}
The related works are summarized as follows. 

\subsection{Gate-Model Quantum Computers}
The model of gate-model quantum computer architectures and the construction of algorithms for qubit architectures are studied in \cite{ref10}. The proposed system model of the work also serves as a reference for our system model. Some related preliminaries can also be found in \cite{ref11,ref12}. 

In \cite{ref9}, the authors defined a gate-model quantum neural network. The proposed system model is a quantum neural network realized via a gate-model quantum computer. 

In \cite{ref9b}, the authors studied a gate-model quantum algorithm called the “Quantum Approximate Optimization Algorithm” (QAOA) and its connection with the Sherrington-Kirkpatrick (SK) \cite{sk} model. The results serve as a framework for analyzing the QAOA, and can be used for evaluating the performance of QAOA on more general problems.

The behavior of the objective function value of the QAOA algorithm for some specific cases has been studied in \cite{a5}. As the authors concluded, for some fixed parameters and instances drawn from a particular distribution, the objective function value is concentrated such that typical instances have almost the same value of the objective function. 

Further performance analyses of the QAOA algorithm can be found in \cite{a7,a8}. 
Practical implementations connected to gate-model quantum computing and the QAOA algorithm can be found in \cite{a9,a10}.

In \cite{nadd7}, the authors studied methods quantum computing based hybrid solution methods for large-scale discrete-continuous optimization problems. The results are straightforwardly applicable for gate-model quantum computers. As the authors concluded, the proposed quantum computing methods have high computational efficiency in terms of solution quality and computation time, by utilizing the unique features of both classical and quantum computers.

A recent experimental quantum computer implementation has been demonstrated in \cite{qcomputer}. The results of the work confirmed the quantum supremacy \cite{a1,a2} of quantum computers over traditional computers in particular problems. 

The work of \cite{a4} gives a summary on quantum computing technologies in the NISQ (Noisy Intermediate-Scale Quantum) era and beyond.  

\subsection{Quantum State Preparation}

In \cite{nadd1}, the authors studied the utilization of reinforcement learning in different phases of quantum control. The authors studied the performance of reinforcement learning in the problem of finding short, high-fidelity driving protocol from an initial to a target state in non-integrable many-body quantum systems of interacting qubits. As the authors concluded, the performance of the proposed reinforcement learning method is comparable to optimal control methods.

In \cite{nadd2}, the authors studied the question of efficient variational simulation of non-trivial quantum states. The results represent an efficient and general route for preparing non-trivial quantum states that are not adiabatically connected to unentangled product states. The system model integrates a feedback loop between a quantum simulator and a classical computer. As the authors concluded, the proposed results are experimentally realizable on near-term quantum devices of synthetic quantum systems.

In \cite{nadd3}, the problem of simulated quantum computation of molecular energies is studied. While, on a traditional computer the calculation time for the energy of atoms and molecules scales exponentially with system size, on a quantum computer it scales polynomially. The authors demonstrated that such chemical problems can be solved via quantum algorithms using modest numbers of qubits.

In \cite{nadd4}, the authors studied the modeling and feedback control design for quantum state preparation. The work describes the modeling methods of controlled quantum systems under continuous observation, and studies the design of feedback controls that prepare particular quantum states. In the proposed analysis, the field-theoretic model is subjected to statistical inference and is ultimately controlled.

For an information theoretical analysis of quantum optimal control, see \cite{nadd5}. In this work, the authors studied quantum optimal control problems and the solving methods. The authors showed that if an efficient classical representation of the dynamics exists, then optimal control problems on many-body quantum systems can be solved efficiently with finite precision. As the authors concluded, the size of the space of parameters necessary to solve quantum optimal control problems defined on pure, mixed states and unitaries is polynomially bounded from the size of the of the set of reachable states in polynomial time.

In \cite{nadd6}, the authors studied the complexity of controlling quantum many-body dynamics. As the authors found, arbitrary time evolutions of many-body quantum systems can be reversed even in cases when only part of the Hamiltonian can be controlled. The authors also determined a lower bound on the control complexity of a many-body quantum dynamics for some particular cases.

\section{System Model and Problem Statement} 
\label{sec2}
\subsection{System Model}
Let $QG$ be the quantum gate structure of a gate-model quantum computer, defined with $L$ unitary gates, where an $i$-th, $i=1,\ldots ,L$ unitary gate $U_{i} \left(\theta _{i} \right)$ is 
\begin{equation} \label{1)} 
U_{i} \left(\theta _{i} \right)=\exp \left(-i\theta _{i} P_{i}\right),                                                               
\end{equation} 
where $P_{i}$ is a generalized Pauli operator formulated by the tensor product of Pauli operators $\left\{X,Y,Z\right\}$, while $\theta _{i} $ is the gate parameter associated with $U_{i} \left(\theta _{i} \right)$. 

The $L$ unitary gates formulate a system state ${| \vec{\theta } \rangle} $ of the quantum computer, as
\begin{equation} \label{ZEqnNum335090} 
{| \vec{\theta } \rangle} =U_{L} \left(\theta _{L} \right)U_{L-1} \left(\theta _{L-1} \right)\ldots U_{1} \left(\theta _{1} \right),                                         
\end{equation} 
where $U_{i} \left(\theta _{i} \right)$ identifies an $i$-th unitary gate and $\vec{\theta }$ is the collection of the gate parameters of the unitaries, defined as
\begin{equation} \label{ZEqnNum837426} 
\vec{\theta }=\left(\theta _{1} ,\ldots ,\theta _{L} \right)^{T} .     
\end{equation} 

The system state in \eqref{ZEqnNum335090} identifies a $U(\vec{\theta })$ unitary resulted from the product of the $L$ unitary operations $U_{L} \left(\theta _{L} \right)U_{L-1} \left(\theta _{L-1} \right)\ldots U_{1} \left(\theta _{1} \right)$ of the quantum computer. For an input quantum system $\left| \varphi  \right\rangle $, the $\left| \psi  \right\rangle $ output quantum system of $QG$ is as 
\begin{equation} 
\begin{split}
\left| \psi  \right\rangle &={| \vec{\theta } \rangle}\left| \varphi  \right\rangle \\
&=U(\vec{\theta })\left| \varphi  \right\rangle \\
&= U_{L} \left(\theta _{L} \right)U_{L-1} \left(\theta _{L-1} \right)\ldots U_{1} \left(\theta _{1} \right)\left| \varphi  \right\rangle .
\end{split}
\end{equation} 

The $f(\vec{\theta })$ objective function subject to a maximization is defined as
\begin{equation} \label{ZEqnNum830871} 
f( {\vec{\theta }})=\langle  {\vec{\theta }} |C(z)|\vec{\theta }\rangle ,
\end{equation} 
where $C\left(z\right)$ identifies a classical objective function \cite{ref10} of a computational problem, while $z$ is a bitstring resulting from an $M$ measurement. 

The $C$ classical objective function represents the objective function of a computational problem $\mathcal{P}$ fed into the quantum computer. The $C$ objective function is a subject of maximization via the quantum computer. Objective function examples are the combinatorial optimization problems \cite{ref9}, and the objective functions of large-scale programming problems \cite{nadd7}, such as the graph coloring problem, molecular conformation problem, job-shop scheduling problem, manufacturing cell formation problem, and the vehicle routing problem \cite{nadd7}.

At a target value $f^{*} (\vec{\theta })$, 
\begin{equation} \label{6)} 
{{f}^{*}}( {\vec{\theta }})=f( {{{\vec{\theta }}}^{*}})=\langle  {{{\vec{\theta }}}^{*}} | {{C}^{*}}(z)|{{{\vec{\theta }}}^{*}}\rangle ,
\end{equation} 
the problems are therefore to find a $\vec{\theta }^{*} $ that reaches the target state ${| \vec{\theta }^{*} \rangle} $ of the quantum computer and to identify the optimal $C^{*} \left(z\right)$ computational pathway for ${| \vec{\theta }^{*} \rangle} $.

\begin{definition}
(Computational pathway). The connectivity of $C\left(z\right)$ defines a computational pathway as the sum of $C_{ij} \left(z\right)$ objective function values evaluated between quantum states $ij$ in the $QG$ structure: 
\begin{equation} \label{5)} 
C\left(z\right)=\sum _{ij\in QG}C_{ij} \left(z\right) .                                                             
\end{equation} 
The $C \left(z\right)$ computational pathway between quantum states $ij$ sets the connectivity of objective function in a given state ${| \vec{\theta } \rangle}$ of the quantum computer.
\end{definition}
\begin{definition}
(Optimal computational pathway). The $C^{*} \left(z\right)$ optimal computational pathway of the quantum computer is the computational pathway associated with the optimal (target) state ${| \vec{\theta }^{*} \rangle}$. The $C^{*} \left(z\right)$ computational pathway sets the connectivity of the objective function in the target state ${| \vec{\theta }^{*} \rangle}$ of the quantum computer.
\end{definition}
\begin{definition}
(Connectivity graph of the quantum hardware).
The ${\rm {\mathcal G}}=\left(V,S\right)$ connectivity graph refers to the fixed connectivity of the hardvare of the $QG$ quantum gate structure, where the $v\in V$ nodes are quantum systems, while the $s\in S$ edges are the connections between them. An edge $s_{i,j} $ with index pair $\left(i,j\right)$ identifies a physical connection between quantum systems $v_{i} $ and $v_{j} $. 
\end{definition}

\subsection{Problem Statement}
The problem statement is given in Problems 1 and 2, as follows. 
\begin{problem}
(Target state determination of the quantum computer). For a given target objective function value $f(\vec{\theta }^{*})$, find the ${| \vec{\theta }^{*} \rangle} $ target state of the quantum computer from an initial state ${| \vec{\theta }_{0} \rangle} $ and an initial objective function $f(\vec{\theta }_{0})$. 
\end{problem}
\begin{problem}
(Computational path of the quantum computer in the target state). Determine the connectivity of the objective function $C^{*} \left(z\right)$ of $f(\vec{\theta }^{*})$ for the target quantum state ${| \vec{\theta }^{*} \rangle} $ of the quantum computer.
\end{problem}

Our solutions for Problems 1 and 2 are proposed in Theorems 1, 2, and Lemma 1.

\section{Results}
\label{sec3}
\subsection{Evaluation of the Target State of the Quantum Computer}
\begin{theorem}
(Target system state evaulation). The ${| \vec{\theta }^{*} \rangle} $ system state associated with the $f(\vec{\theta }^{*})$ target objective function can be evaluated from an initial state ${| \vec{\theta }_{0} \rangle} $ via a decomposition of the initial objective function $f(\vec{\theta }_{0})$.
\end{theorem}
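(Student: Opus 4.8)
The plan is to realize the claimed decomposition through the kernel-method framework announced in \sref{sec1}, treating the objective function $f(\vec{\theta})$ as an element of a reproducing kernel Hilbert space (RKHS) built over the gate-parameter vectors $\vec{\theta}$. First I would introduce a feature map $\phi$ that embeds each parameter vector $\vec{\theta}=(\theta_1,\ldots,\theta_L)^T$ into a high-dimensional Hilbert space $\mathcal{H}$, and associate with the classical objective operator $C(z)$ a weight vector $w$ so that
\begin{equation}
f(\vec{\theta})=\langle \vec{\theta}|C(z)|\vec{\theta}\rangle=\langle w,\phi(\vec{\theta})\rangle_{\mathcal{H}}.
\end{equation}
The corresponding kernel $k(\vec{\theta},\vec{\theta}')=\langle \phi(\vec{\theta}),\phi(\vec{\theta}')\rangle_{\mathcal{H}}$ then encodes the inner-product geometry of the system states without explicit reference to the exponential form of the unitaries in \eref{1)}.

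Next I would decompose the initial objective value $f(\vec{\theta}_0)$ over a set of reference parameter vectors. Invoking the reproducing property, $f(\vec{\theta}_0)$ admits an expansion $f(\vec{\theta}_0)=\sum_j \alpha_j\, k(\vec{\theta}_j,\vec{\theta}_0)$, so that the scalar initial value carries, through the coefficients $\alpha_j$ and the kernel evaluations, the information needed to locate other states relative to $|\vec{\theta}_0\rangle$ in $\mathcal{H}$. Since the target value $f(\vec{\theta}^*)=f^*(\vec{\theta})$ of \eref{6)} is known by hypothesis, I would impose $\langle w,\phi(\vec{\theta}^*)\rangle_{\mathcal{H}}=f^*(\vec{\theta})$ and reuse the same kernel decomposition to express $\phi(\vec{\theta}^*)$, and hence $|\vec{\theta}^*\rangle$, as a transformation of $\phi(\vec{\theta}_0)$ that fixes the required value. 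Because this transformation acts directly in feature space, it produces $|\vec{\theta}^*\rangle$ from $|\vec{\theta}_0\rangle$ and the decomposition of $f(\vec{\theta}_0)$ alone, with no intermediate system states prepared between the two.

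The hard part will be establishing that the decomposition is rich enough to be invertible on the relevant domain: the map $\vec{\theta}\mapsto|\vec{\theta}\rangle$ is highly nonlinear through the Pauli exponentials, so passing from the single scalar $f(\vec{\theta}_0)$ to the full high-dimensional target state requires that the feature map be universal and that the reference vectors $\{\vec{\theta}_j\}$ span the subspace containing both $\phi(\vec{\theta}_0)$ and $\phi(\vec{\theta}^*)$. I would handle this by choosing the kernel so that the associated RKHS separates the states of interest, and by verifying that the coefficients $\alpha_j$ are uniquely determined by the target constraint, which guarantees that the recovered $|\vec{\theta}^*\rangle$ is the intended target state rather than a spurious solution sharing the same objective value.
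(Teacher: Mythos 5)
Your proposal heads into the RKHS/feature-map machinery immediately, but the paper's proof of Theorem 1 is actually a much more elementary linear-algebra argument carried out directly in the gate-parameter space; the kernel machine only enters later (in Theorem 2) as the device that supplies the regression coefficients. The paper writes $f(\vec{\theta}_0)=(\vec{\theta}_0)^T\chi$ for a regression-coefficient vector $\chi$, and then decomposes the \emph{parameter vector} (not the objective function) as $\vec{\theta}_0=F(\vec{\theta}_0)+F(U)$, where $F(\vec{\theta}_0)=\chi^+f(\vec{\theta}_0)$ is the component determined by the objective value via the Moore--Penrose pseudoinverse, and $F(U)=\vec{\theta}_0-F(\vec{\theta}_0)$ is a null-space component assumed fixed for every $\vec{\theta}$. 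Applying the same split to the target and reusing the invariant $F(U)$ yields the explicit update
\begin{equation}
\vec{\theta}^{*}=\vec{\theta}_0+\chi^{+}\left(f(\vec{\theta}^{*})-f(\vec{\theta}_0)\right),
\end{equation}
which is the entire content of the theorem.

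The gap in your proposal is precisely the point you flag as "the hard part" and then do not resolve: a single scalar constraint $\langle w,\phi(\vec{\theta}^{*})\rangle_{\mathcal{H}}=f^{*}(\vec{\theta})$ cannot determine the high-dimensional element $\phi(\vec{\theta}^{*})$, and no choice of universal kernel or uniqueness argument for the coefficients $\alpha_j$ can change that --- the constraint defines an affine hyperplane in $\mathcal{H}$, not a point. The paper escapes this not by enriching the decomposition of $f(\vec{\theta}_0)$ but by positing that the objective value controls only the $\chi^{+}$-direction of $\vec{\theta}$, while everything orthogonal to it ($F(U)$) is inherited unchanged from the initial state; the missing degrees of freedom are supplied by $\vec{\theta}_0$ itself. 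Your proposal never introduces this objective-independent invariant component, never produces a closed-form expression for $\vec{\theta}^{*}$, and therefore does not establish the claim. To repair it you would need to add the analogue of the $F(\vec{\theta})+F(U)$ split (or some other mechanism that transports the unconstrained components of $\phi(\vec{\theta}_0)$ to $\phi(\vec{\theta}^{*})$) and then address the pre-image problem of pulling $\phi(\vec{\theta}^{*})$ back to $\vec{\theta}^{*}$, which is what the paper defers to Theorem 2.
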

\begin{proof}
Let $f(\vec{\theta }_{0})$ be the initial objective function value associated with ${| \vec{\theta }_{0} \rangle} $ and with gate parameters $\vec{\theta }_{0} $. The $f(\vec{\theta }_{0})$ value can be rewritten as
\begin{equation} \label{ZEqnNum989979} 
f(\vec{\theta }_{0})=(\vec{\theta }_{0})^{T} \chi ,                                                                
\end{equation} 
where $\chi $ is a vector of regression coefficients being evaluated via a ${\rm {\mathcal K}}$ kernel machine (see \eqref{a1)}), while $\vec{\theta }_{0} $ is decomposed as
\begin{equation} \label{ZEqnNum716686} 
\vec{\theta }_{0} =F(\vec{\theta }_{0})+F\left(U\right),                                                               
\end{equation} 
where $F(\vec{\theta }_{0})$ and $F\left(U\right)$ are orthogonal components, such that $F(\vec{\theta }_{0})$ depends on the actual objective function value, while $F\left(U\right)$ is a component independent from the current value of the objective function (i.e., $F\left(U\right)$ is a fixed component for an arbitrary $\vec{\theta }$) that lies in the null space. Since $\vec{\theta }_{0} $ and $f(\vec{\theta }_{0})$ are known, the $\chi $ regression coefficient vector can be determined from \eqref{ZEqnNum989979}.

Using \eqref{ZEqnNum716686}, the initial objective function in \eqref{ZEqnNum989979} can be rewritten at a particular $\chi $ as
\begin{equation} \label{ZEqnNum902493} 
f(\vec{\theta }_{0})=(F(\vec{\theta }_{0})+F(U))^{T} \chi ,                                                  
\end{equation} 
where the $F(\vec{\theta }_{0})$ component is evaluated at a given $\chi $ as
\begin{equation} \label{ZEqnNum214099} 
F(\vec{\theta }_{0})=\chi ^{+} f(\vec{\theta }_{0}),                                                              
\end{equation} 
where $+$ is the Moore--Penrose pseudoinverse \cite{ref25,ref35}. Since $F\left(U\right)$ has no dependence on the actual system state, it can be expressed from \eqref{ZEqnNum716686} and \eqref{ZEqnNum214099} as
\begin{equation} \label{ZEqnNum397382} 
F\left(U\right)=\vec{\theta }_{0} -F(\vec{\theta }_{0}).                                                           
\end{equation} 
Then, let $\vec{\theta }^{*} $ be the parameter vector associated with the target state ${| \vec{\theta }^{*} \rangle} $ of the target objective function $f(\vec{\theta }^{*})$.

Applying the same decomposition steps for the target $f(\vec{\theta }^{*})$, the component $F(\vec{\theta }^{*})$ at a given $\chi $ is 
\begin{equation} \label{ZEqnNum980531} 
F(\vec{\theta }^{*})=\chi ^{+} f(\vec{\theta }^{*}).                                                            
\end{equation} 
Therefore, the target vector $\vec{\theta }^{*} $ can be rewritten via \eqref{ZEqnNum980531} and \eqref{ZEqnNum397382} as
\begin{equation} \label{ZEqnNum175619} 
\vec{\theta }^{*} =F(\vec{\theta }^{*})+F(U)=\vec{\theta }_{0} +(\chi ^{+} f(\vec{\theta }^{*})-\chi ^{+} f(\vec{\theta }_{0})).                          
\end{equation} 
Using the $\vec{\theta }^{*} $ gate parameters in \eqref{ZEqnNum175619}, the target system state ${| \vec{\theta }^{*} \rangle} $ can be built up to achieve the target objective function $f(\vec{\theta }^{*})$. The target system state ${| \vec{\theta }^{*} \rangle} $ of a given $f(\vec{\theta }^{*})$ is therefore evolvable from the initial values $\vec{\theta }_{0} $,  $f(\vec{\theta }_{0})$, and $\chi $ that can be computed from \eqref{ZEqnNum989979}.  

Algorithm 1 summarizes the steps of the target system state evolution method.

\setcounter{algocf}{0}
\begin{algo}
  \DontPrintSemicolon
\caption{\textit{System state evolution of the quantum computer for a target objective function.}}

\textbf{Step 1}. Let $\vec{\theta }_{0} $ be the initial gate parameter vector and $f(\vec{\theta }_{0})$ be the initial objective function value.

\textbf{Step 2}. Set the target objective function value $f(\vec{\theta }^{*})$.

\textbf{Step 3}. Determine $\chi $ via the initial $f(\vec{\theta }_{0})$, as given by equation \eqref{ZEqnNum989979}. 

\textbf{Step 4}. Using $\chi $, determine $\vec{\theta }^{*} $ via \eqref{ZEqnNum175619} from $\vec{\theta }_{0} $, $f(\vec{\theta }_{0})$, and $f(\vec{\theta }^{*})$.

\textbf{Step 5}. Prepare the target state ${| \vec{\theta }^{*} \rangle} $ by the quantum computer. 

\end{algo}

\end{proof}

The results on the determination of the connectivity of the objective function in the target state are included in Theorem 2.

\subsection{Connectivity of the Objective Function in the Target State}
\begin{theorem}
(Connectivity of the objective function in the target state). The $\left(i,j\right)$ pairs of the $s_{i,j} $ edges of ${\rm {\mathcal G}}$, $\forall s_{i,j} \in S$, in a target objective function $C^{*} \left(z\right)=\sum _{\forall s_{i,j} \in S}C_{s_{i,j} }^{*} \left(z\right) $ associated to $f^{*} (\vec{\theta })$ can be determined from $\vec{\theta }^{*} $, where $C_{s_{i,j} }^{*} \left(z\right)$ is an objective function component associated to $s_{i,j} $.
\end{theorem}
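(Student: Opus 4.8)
The plan is to mirror the kernel-machine decomposition of Theorem~1, but to carry it out edge-by-edge across the connectivity graph ${\rm {\mathcal G}}$. Starting from the target value $f^{*}(\vec{\theta })=\langle \vec{\theta }^{*}|C^{*}(z)|\vec{\theta }^{*}\rangle$ together with the edge decomposition $C^{*}(z)=\sum_{\forall s_{i,j}\in S}C^{*}_{s_{i,j}}(z)$, linearity of the inner product gives $f^{*}(\vec{\theta })=\sum_{\forall s_{i,j}\in S}\langle \vec{\theta }^{*}|C^{*}_{s_{i,j}}(z)|\vec{\theta }^{*}\rangle$, so the target objective function splits into edge-wise components $f^{*}_{s_{i,j}}=\langle \vec{\theta }^{*}|C^{*}_{s_{i,j}}(z)|\vec{\theta }^{*}\rangle$. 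The goal is then to recover each operator component $C^{*}_{s_{i,j}}(z)$ from the known parameter vector $\vec{\theta }^{*}$, which Theorem~1 already supplies via \eqref{ZEqnNum175619} and which fixes ${| \vec{\theta }^{*} \rangle}$ through \eqref{ZEqnNum335090}.

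First I would fix the correspondence between gate parameters and the edges of ${\rm {\mathcal G}}$. Each unitary $U_{i}(\theta^{*}_{i})=\exp(-i\theta^{*}_{i}P_{i})$ has a support determined by the generalized Pauli operator $P_{i}$; on the fixed hardware architecture a two-qubit generator is supported on a physical connection $s_{i,j}\in S$, while single-qubit generators are supported on the nodes $v\in V$. This induces a partition of the gate-index set $\{1,\ldots,L\}$ and hence, for each edge $s_{i,j}$, a subvector $\vec{\theta }^{*}|_{s_{i,j}}$ of the parameters coupling the systems joined by that edge.

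Next I would apply the decomposition of Theorem~1 restricted to each edge. For a fixed $s_{i,j}$ the component value is written as $f^{*}_{s_{i,j}}=(\vec{\theta }^{*}|_{s_{i,j}})^{T}\chi_{s_{i,j}}$, where $\chi_{s_{i,j}}$ is the local vector of regression coefficients produced by the ${\rm {\mathcal K}}$ kernel machine, exactly as $\chi$ is obtained in \eqref{ZEqnNum989979}. Since $\vec{\theta }^{*}$ is available from Theorem~1, each subvector $\vec{\theta }^{*}|_{s_{i,j}}$ is known; together with $f^{*}_{s_{i,j}}$ and the Moore--Penrose pseudoinverse this yields $\chi_{s_{i,j}}$ and therefore the component $C^{*}_{s_{i,j}}(z)$ through the relation $\langle \vec{\theta }^{*}|C^{*}_{s_{i,j}}(z)|\vec{\theta }^{*}\rangle=f^{*}_{s_{i,j}}$. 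Summing over all $s_{i,j}\in S$ reconstructs $C^{*}(z)$ and, in particular, identifies the index pairs $(i,j)$ carrying nonzero $C^{*}_{s_{i,j}}(z)$, which is precisely the connectivity of the objective function in the target state.

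The hard part will be making the edge decomposition well-defined: the splitting $C^{*}(z)=\sum_{s_{i,j}}C^{*}_{s_{i,j}}(z)$ is a priori non-unique, and allocating the contributions of single-qubit generators (supported on nodes rather than edges) to specific edges requires a canonical convention tied to ${\rm {\mathcal G}}$. I would resolve this by using the fixed hardware connectivity to assign each generator $P_{i}$ to the unique edge it couples, and by verifying the rank conditions that make the per-edge pseudoinverse well-posed, so that each $\chi_{s_{i,j}}$, and hence each $C^{*}_{s_{i,j}}(z)$, is uniquely determined from $\vec{\theta }^{*}$.
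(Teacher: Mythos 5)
Your route diverges substantially from the paper's, and it contains gaps that I do not think can be repaired within your framework. The paper never decomposes $f^{*}$ edge-by-edge. Instead it packages the unknown edge list and the unknown component values into a single input-space element $\Upsilon^{*}=(\vec{\kappa}^{*},\vec{\Omega}^{*})^{T}$, maps it by the nonlinear $\Gamma$ into the RKHS ${\rm {\mathcal H}}$, where the image is linear in the gate parameters ($\tau^{*}=\lambda^{*}\vec{\theta}^{*}$, computable once Theorem~1 supplies $\vec{\theta}^{*}$), and then solves the ill-posed kernel pre-image problem: project onto the span of the mapped training set via the kernel eigenvectors, and minimize the regularized distance $\left\|\Gamma(\Upsilon^{*})-{\rm {\mathcal P}}(\tau_{0})\right\|^{2}+\Phi\left\|\Upsilon^{*}-\Upsilon_{0}\right\|^{2}$ to obtain the stable solution \eqref{ZEqnNum723849}, iterated in Lemma~1. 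The connectivity is the component $\vec{\kappa}^{*}$ of that recovered pre-image.

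The concrete gaps in your version are these. First, you assign each generator $P_{i}$ to the hardware edge it couples and read the connectivity off the fixed architecture; but the paper explicitly separates the physical-layer connectivity of the hardware from the logical-layer connectivity of $C(z)$ (``these connections are represented in different layers''), and the theorem's content is to determine the latter --- your construction only reproduces the hardware graph, which is already known. Second, your per-edge regression $f^{*}_{s_{i,j}}=(\vec{\theta}^{*}|_{s_{i,j}})^{T}\chi_{s_{i,j}}$ requires the scalar targets $f^{*}_{s_{i,j}}$ as inputs, but only the aggregate $f^{*}(\vec{\theta})$ is given; splitting it into edge components presupposes knowledge of the $C^{*}_{s_{i,j}}(z)$ you are trying to find, so the argument is circular. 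Third, even granting known $f^{*}_{s_{i,j}}$, the relation $\langle\vec{\theta}^{*}|C^{*}_{s_{i,j}}(z)|\vec{\theta}^{*}\rangle=f^{*}_{s_{i,j}}$ is a single scalar constraint on an operator component, so the per-edge pseudoinverse is hopelessly underdetermined and the rank conditions you defer to cannot hold. The missing idea is exactly the one the paper is built on: treat $(\vec{\kappa}^{*},\vec{\Omega}^{*})$ jointly as the pre-image of a known element of ${\rm {\mathcal H}}$ and recover it by regularized projection onto training data, rather than trying to invert the objective function edge by edge.
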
 
\begin{proof}
Let ${\rm {\mathcal G}}=\left(V,S\right)$ be the connectivity graph \cite{ref10} associated with the $QG$ quantum gate structure of the quantum computer (see Definition 3), and let $\vec{\theta }^{*} $ be evaluated as given in \eqref{ZEqnNum175619}. 
Let ${\rm {\mathcal X}}$ be the input space and let ${\rm {\mathcal K}}$ be a kernel machine, defined for a given $x,y\in {\rm {\mathcal X}}$ via kernel function \cite{uj1}, as
\begin{equation} \label{14)} 
{\rm {\mathcal K}}\left(x,y\right)=\Gamma \left(x\right)^{T} \Gamma \left(y\right),                                                           
\end{equation} 
where
\begin{equation} \label{ZEqnNum942922} 
\Gamma :{\rm {\mathcal X}}\to {\rm {\mathcal H}} 
\end{equation} 
is a nonlinear map from ${\rm {\mathcal X}}$ to the high-dimensional Reproducing Kernel Hilbert Space (RKHS) ${\rm {\mathcal H}}$ associated with ${\rm {\mathcal K}}$. Without loss of generality,
\begin{equation} \label{16)} 
\dim \left({\rm {\mathcal H}}\right){\rm \gg }\dim \left({\rm {\mathcal X}}\right),                                                         
\end{equation} 
and we assume that the map $\Gamma $ in \eqref{ZEqnNum942922} has no inverse. 

The connectivity of the objective function and the pairwise connectivity of the quantum computer's hardware are not related, since these connections are represented in different layers \cite{ref10}. While the physical-layer connectivity is determined by the $QG$ quantum gate structure of the fixed quantum hardware, the connectivity of the $C\left(z\right)$ objective function is determined in the logical-layer that formulates a computational pathway. As a corollary, the proposed algorithm works on fixed quantum hardware and iterates in the logical layer to determine the connectivity of the objective function such that the objective function is maximized.

Let $\vec{\kappa }$ be the vector of $s_{i,j} $ edges, $\forall s_{i,j} \in S$, and let $\vec{\Omega }$ be the vector of the actual $C_{s_{i,j} } \left(z\right)$ objective function values associated with the $s_{i,j} $ edges. The initial computational path of the quantum computer is therefore 
\begin{equation} \label{17)} 
C\left(z\right)=\sum _{\kappa _{i} }\Omega _{\kappa _{i} }  =\sum _{\forall s_{i,j} \in S}C_{s_{i,j} } \left(z\right) ,                                              
\end{equation} 
where $\kappa _{i} $ and $\Omega _{\kappa _{i} } $ identify the $i$-th elements of $\vec{\kappa }$ and $\vec{\Omega }$, respectively.

Then, let $\Upsilon _{0} $ be an element of the input space ${\rm {\mathcal X}}$, defined as 
\begin{equation} \label{18)} 
\Upsilon _{0} =(\vec{\kappa },\vec{\Omega })^{T} ,                                                             
\end{equation} 
and let $\tau _{0} $ be the map of $\Upsilon _{0} $ in ${\rm {\mathcal H}}$, as
\begin{equation} \label{ZEqnNum679565} 
\tau _{0} =\Gamma \left(\Upsilon _{0} \right)=\lambda \vec{\theta }_{0} ,                                                         
\end{equation} 
where $\lambda $ is a matrix of eigenvectors associated with the edge and objective function values in ${| \vec{\theta }_{0} \rangle} $.

Then, let $\Upsilon ^{*} $ be the target element in ${\rm {\mathcal X}}$ subject to be determined,
\begin{equation} \label{ZEqnNum216438} 
\Upsilon ^{*} =(\vec{\kappa }^{*} ,\vec{\Omega }^{*})^{T},
\end{equation} 
where $\vec{\kappa }^{*} $ and $\vec{\Omega }^{*} $ are target vectors that identify the connectivity of the $C_{s_{i,j} }^{*} \left(z\right)$ objective function values in the target state ${| \vec{\theta }^{*} \rangle} $, such that the $C^{*} \left(z\right)$ computational path can be evaluated as
\begin{equation} \label{ZEqnNum456970} 
C^{*} \left(z\right)=\sum _{\kappa _{i}^{*} }\Omega _{\kappa _{i}^{*} }^{*}  =\sum _{\forall s_{i,j} \in S}C_{s_{i,j} }^{*} \left(z\right) ,                                               
\end{equation} 
where $\kappa _{i}^{*} $ and $\Omega _{\kappa _{i}^{*} }^{*} $ refer to the $i$-th elements of  $\vec{\kappa }^{*} $ and $\vec{\Omega }^{*} $, respectively.

Then, let $\tau ^{*} $ be the map of the target $\Upsilon ^{*} \in {\rm {\mathcal X}}$ in ${\rm {\mathcal H}}$, defined as
\begin{equation} \label{ZEqnNum460233} 
\tau ^{*} =\Gamma \left(\Upsilon ^{*} \right)=\lambda ^{*} \vec{\theta }^{*} ,                                                       
\end{equation} 
where $\lambda ^{*} $ is a matrix of eigenvectors associated with the edge and objective function values in state ${| \vec{\theta }^{*} \rangle} $.

Since \eqref{ZEqnNum460233} is linear, in the ${| \vec{\theta }^{*} \rangle} $ state, the maps $\Gamma \left(\vec{\kappa }\right)$ and $\Gamma (\vec{\Omega })$ of $\vec{\kappa }^{*} $ and $\vec{\Omega }^{*} $, can be rewritten as
\begin{equation} \label{23)} 
\Gamma \left(\vec{\kappa }\right)=\mu \vec{\theta }^{*}  
\end{equation} 
and
\begin{equation} \label{24)} 
\Gamma (\vec{\Omega })=\nu \vec{\theta }^{*}  
\end{equation} 
with
\begin{equation} \label{25)} 
\lambda ^{*} =\left(\mu ,\nu \right)^{T} .                                                              
\end{equation} 
Since \eqref{ZEqnNum460233} can be evaluated from \eqref{ZEqnNum679565} in ${\rm {\mathcal H}}$, the task here is therefore to identify $\Upsilon ^{*} $ in ${\rm {\mathcal X}}$ from $\tau ^{*} $. As $\Upsilon ^{*} $ is determined, the target vectors $\vec{\kappa }^{*} $ and $\vec{\Omega }^{*} $ for the target objective function in \eqref{ZEqnNum456970} are also found.

Since the map $\Gamma $ in \eqref{ZEqnNum942922} has no inverse, finding $\Upsilon ^{*} $ in ${\rm {\mathcal X}}$ from $\tau ^{*} $ defines an ill-posed problem \cite{ref26,ref27,ref32,ref33,ref34}. In this setting, the determination of $\Upsilon ^{*} $ from $\tau ^{*} $, requires the use of a ${\rm {\mathcal P}}$ projector on $\tau _{0} $ \eqref{ZEqnNum679565} in ${\rm {\mathcal H}}$, which yields a ${\rm {\mathcal P}}\left(\tau _{0} \right)$ element in ${\rm {\mathcal H}}$. If $\tau ^{*} $ lies in (or close to) the span of $\left\{\Gamma \left(\Upsilon _{i} \right)\right\}$, where $\Upsilon _{i} $ is an $i$-th training data, $\Upsilon _{i}  \in {\rm {\mathcal X}}$, from a training set ${\rm {\mathcal S}}_{{\rm {\mathcal X}}}$ of $N$ training data, 
\begin{equation}
{\rm {\mathcal S}}_{{\rm {\mathcal X}}} =\left\{\Upsilon _{1} ,\ldots ,\Upsilon _{N} \right\}, 
\end{equation}
then $\tau ^{*} $ can be represented as a linear combination of the training data \cite{ref26,ref27,ref28}. As a corollary, ${\rm {\mathcal P}}\left(\tau _{0} \right)$ yields a close approximation of $\tau ^{*} $ in ${\rm {\mathcal H}}$:
\begin{equation} \label{26)} 
\tau ^{*} \approx {\rm {\mathcal P}}\left(\tau _{0} \right).                                                                
\end{equation} 
The ${\rm {\mathcal P}}\left(\tau _{0} \right)$ projection is defined as
\begin{equation} \label{ZEqnNum820370} 
{\rm {\mathcal P}}\left(\tau _{0} \right)=\sum _{i=1}^{n}\beta _{i}  V_{i} , 
\end{equation} 
where $V_{i} $ is a matrix of normalized eigenvectors of ${\rm {\mathcal K}}$, while $\beta _{i} $-s are projections as
\begin{equation} \label{28)} 
\beta _{i} =\sum _{j=1}^{N}\alpha _{j}^{i}  {\rm {\mathcal K}}\left(\Upsilon ^{*} ,\Upsilon _{j} \right),                                                        
\end{equation} 
while $\alpha _{i} $ is an $i$-th coefficient in the eigenvector $V$ as 
\begin{equation} \label{29)} 
V=\sum _{i=1}^{N}\alpha _{i} \tau _{i}  ,                                                               
\end{equation} 
where $\tau _{i} $ is the map of training data $\Upsilon _{i} $, as
\begin{equation} \label{30)} 
\tau _{i} =\Gamma \left(\Upsilon _{i} \right).                                                                
\end{equation} 
Then, based on \eqref{28)} and \eqref{29)}, a $j$-th component of $\chi $ from \eqref{ZEqnNum989979}, $\chi =\left\{ {{\chi }_{j}} \right\}_{j=1}^{N} $, can be determined as
\begin{equation} \label{a1)} 
{\chi }_{j}=\sum _{i=1}^{N}\tilde{\alpha }_{i}^{j} {\rm {\mathcal K}}\left(\Upsilon ^{*} ,\tilde{\Upsilon }_{i} \right) ,  
\end{equation} 
where $\tilde{\Upsilon }_{i} $ is a training data from a training set $\tilde{{\rm {\mathcal S}}}_{{\rm {\mathcal X}}} $, such that the constraint \cite{ref25,ref26} of 
\begin{equation} \label{aZEqnNum463012} 
\mu \left(\Gamma \left(\tilde{{\rm {\mathcal S}}}_{{\rm {\mathcal X}}} \right)\right)={\textstyle\frac{1}{N}} \sum _{j=1}^{N}\Gamma \left(\tilde{\Upsilon }_{j} \right)=0  
\end{equation} 
holds for $\tilde{{\rm {\mathcal S}}}_{{\rm {\mathcal X}}} $, where $\mu \left(\Gamma \left(\tilde{{\rm {\mathcal S}}}_{{\rm {\mathcal X}}} \right)\right)$ is the mean of the $\Gamma $-mapped training points $\tilde{{\rm {\mathcal S}}}_{{\rm {\mathcal X}}} $, while $\tilde{\alpha }_{i}^{j} $ is an $i$-th coefficient of a $j$-th eigenvector $\tilde{V}_{j}$,
\begin{equation} \label{a3)} 
\tilde{V}_{j}=\sum _{i=1}^{N}\tilde{\alpha }_{i}^{j} \Gamma \left(\tilde{\Upsilon }_{i} \right) .  
\end{equation} 
As it can be proven \cite{ref25,ref26,ref27}, the constraint in \eqref{aZEqnNum463012} satisfied, if the relation of
\begin{equation} \label{aZEqnNum247044} 
\left\langle \vec{K}\right\rangle \vec{\alpha }=N\lambda \vec{\alpha },  
\end{equation} 
holds for a particular training set ${\rm {\mathcal S}}_{{\rm {\mathcal X}}} $, where $\vec{\alpha }$ is the set of eigenvectors of $\vec{K}$ with eigenvalues $\lambda   $, while $\left\langle \vec{K}\right\rangle $ is the centered kernel matrix of ${\rm {\mathcal K}}$, defined as
\begin{equation} \label{a5)} 
\left\langle \vec{K}\right\rangle =\vec{K}-{\rm {\mathcal I}}\vec{K}-\vec{K}{\rm {\mathcal I}}+{\rm {\mathcal I}}\vec{K}{\rm {\mathcal I}}, 
\end{equation} 
where $\vec{K}$ is the kernel matrix of ${\rm {\mathcal K}}$, while ${\rm {\mathcal I}}$ is as
\begin{equation} \label{a6)} 
{\rm {\mathcal I}}=I-\vec{J}, 
\end{equation} 
where $I$ is the identity matrix, while $\vec{J}$ is an $N\times N$ matrix of ones.

Therefore, ${\chi }$ from \eqref{ZEqnNum989979} can be determined via the use of $\left\langle \vec{K}\right\rangle $ in \eqref{aZEqnNum247044} for a given ${\rm {\mathcal S}}_{{\rm {\mathcal X}}} $, which guarantees that \eqref{aZEqnNum463012} is satisfied, i.e., the $\Gamma \left({\rm {\mathcal S}}_{{\rm {\mathcal X}}} \right)$ mapped training data have zero mean that allows us to evaluate $\chi $ in an exact form.

The goal of projection ${\rm {\mathcal P}}$ is to minimize the $f_{d} \left(\tau ^{*} ,{\rm {\mathcal P}}\left(\tau _{0} \right)\right)$ distance in ${\rm {\mathcal H}}$, where
\begin{equation} \label{ZEqnNum492194} 
f_{d} \left(\tau ^{*} ,{\rm {\mathcal P}}\left(\tau _{0} \right)\right)=\left\| \tau ^{*} -{\rm {\mathcal P}}\left(\tau _{0} \right)\right\| ^{2} =\left\| \Gamma \left(\Upsilon ^{*} \right)-{\rm {\mathcal P}}\left(\tau _{0} \right)\right\| ^{2} .                       
\end{equation} 
Thus, at a given \eqref{ZEqnNum820370} and \eqref{ZEqnNum492194}, the term in \eqref{ZEqnNum216438} can be rewritten as an optimality criteria
\begin{equation} \label{32)} 
\Upsilon ^{*} =\mathop{\arg \min }\limits_{\Upsilon ^{*} \in {\rm {\mathcal X}}} f_{d} \left(\tau ^{*} ,{\rm {\mathcal P}}\left(\tau _{0} \right)\right).                                                
\end{equation} 
By introducing a non-negative regularization parameter $\Phi $ \cite{ref26} to weight the distance of ${{\left\| {{\Upsilon }^{*}}-{{\Upsilon }_{0}} \right\|}^{2}}$, the result in \eqref{ZEqnNum492194} at a given $\Upsilon _{0} \in {\rm {\mathcal X}}$ can be rewritten as
\begin{equation} \label{ZEqnNum730621} 
\begin{split}
   {{f}_{d}}\left( {{\tau }^{*}},\mathcal{P}\left( {{\tau }_{0}} \right) \right)\\&={{\left\| \Gamma \left( {{\Upsilon }^{*}} \right)-\mathcal{P}\left( {{\tau }_{0}} \right) \right\|}^{2}}+\Phi {{\left\| {{\Upsilon }^{*}}-{{\Upsilon }_{0}} \right\|}^{2}} \\ 
 & =\mathcal{K}\left( {{\Upsilon }^{*}},{{\Upsilon }^{*}} \right)-2\sum\limits_{i=1}^{N}{{{\ell }_{i}}}\mathcal{K}\left( {{\Upsilon }^{*}},{{\Upsilon }_{i}} \right) \\ 
 & +\Phi \left( {{\left( {{\Upsilon }^{*}} \right)}^{T}}{{\Upsilon }^{*}}+{{\left( {{\Upsilon }_{0}} \right)}^{T}}{{\Upsilon }_{0}}-2{{\Upsilon }^{*}}{{\Upsilon }_{0}} \right)+\zeta ,  
\end{split}
\end{equation} 
where $\zeta $ refers to terms independent of $\Upsilon ^{*} $, while $\ell _{i} $ is defined as
\begin{equation} \label{34)} 
\ell _{i} =\sum _{k=1}^{n}\beta _{k}  \alpha _{i}^{k} ,                                                                  
\end{equation} 
where $n$ is associated to the projection ${\rm {\mathcal P}}\left(\tau _{0} \right)$, since $\tau _{0} $ is projected to the subspace spanned by the first $n$ eigenvectors $V_{1} ,\ldots ,V_{q} $.

The result in \eqref{ZEqnNum730621} can be simplified by removing all terms independent of $\Upsilon ^{*} $, such that $f_{d} \left(\tau ^{*} ,{\rm {\mathcal P}}\left(\tau _{0} \right)\right)$ can be minimized for arbitrary ${\rm {\mathcal K}}$, as 
\begin{equation} \label{ZEqnNum927192} 
\begin{split}
   {{f}_{d}}\left( {{\tau }^{*}},\mathcal{P}\left( {{\tau }_{0}} \right) \right)=&\mathcal{K}\left( {{\Upsilon }^{*}},{{\Upsilon }^{*}} \right) \\ 
 & -2\sum\limits_{i=1}^{N}{{{\ell }_{i}}}\mathcal{K}\left( {{\Upsilon }^{*}},{{\Upsilon }_{i}} \right)+\Phi \left( {{\left( {{\Upsilon }^{*}} \right)}^{T}}{{\Upsilon }^{*}}-2{{\Upsilon }^{*}}{{\Upsilon }_{0}} \right),  
\end{split}
\end{equation} 
where 
\begin{equation} \label{36)} 
{\rm {\mathcal K}}\left(\Upsilon ^{*} ,\Upsilon ^{*} \right)=\Gamma \left(\Upsilon ^{*} \right)^{T} \Gamma \left(\Upsilon ^{*} \right)=\left(\tau ^{*} \right)^{T} \tau ^{*} .                                    
\end{equation} 
At a ${\rm {\mathcal P}}\left(\tau _{0} \right)$ with relation \eqref{ZEqnNum927192}, $\Upsilon ^{*} $ is determined as follows. Using \eqref{ZEqnNum927192} with an arbitrary ${\rm {\mathcal K}}$, $\Upsilon ^{*} $ can be evaluated as
\begin{equation} \label{ZEqnNum723849} 
\Upsilon ^{*} ={\textstyle\frac{1}{\tau ^{*} {\rm {\mathcal P}}\left(\tau _{0} \right)+\Phi }} \sum _{i=1}^{N}\ell _{i} {\rm {\mathcal K}}\left(\Upsilon ^{*} ,\Upsilon _{i} \right)\Upsilon _{i} +\Phi  \Upsilon _{0} , 
\end{equation} 
where the $\Phi $ regularization coefficient achieves the stability of $\Upsilon ^{*} $, while
\begin{equation} \label{38)} 
\tau ^{*} {\rm {\mathcal P}}\left(\tau _{0} \right)=\Gamma \left(\Upsilon ^{*} \right){\rm {\mathcal P}}\left(\Gamma \left(\Upsilon _{0} \right)\right)=\sum _{i=1}^{N}\ell _{i} {\rm {\mathcal K}}\left(\Upsilon ^{*} ,\Upsilon _{i} \right) ,                               
\end{equation} 
where ${\rm {\mathcal P}}\left(\tau _{0} \right)$ is defined in \eqref{ZEqnNum820370}. 

Then let ${\rm {\mathcal K}}'$ be the derivative of  ${\rm {\mathcal K}}$ such that it formulates the gradient with respect to $\Upsilon ^{*} $ as
\begin{equation} \label{ZEqnNum328288} 
\begin{split}
  & {{\nabla }_{{{\Upsilon }^{*}}}}\left( {{f}_{d}}\left( {{\tau }^{*}},\mathcal{P}\left( {{\tau }_{0}} \right) \right) \right) \\ 
 & =\sum\limits_{i=1}^{N}{{{\ell }_{i}}{\mathcal{K}}'\left( {{\Upsilon }^{*}},{{\Upsilon }_{i}} \right)}\left( {{\Upsilon }^{*}}-{{\Upsilon }_{i}} \right)+\Phi \left( {{\Upsilon }^{*}}-{{\Upsilon }_{0}} \right).  
\end{split}
\end{equation} 
As follows, for a $\vec{\theta }^{*} $, the target $\vec{\kappa }^{*} $ and $\vec{\Omega }^{*} $ can be determined for an arbitrary ${\rm {\mathcal K}}$ via a stable solution $\Upsilon ^{*} $ \eqref{ZEqnNum723849}, such that $\vec{\kappa }^{*} $ contains the $\left(i,j\right)$ pairs of the $s_{i,j} $ edges for $C_{s_{i,j} }^{*} \left(z\right)$, while $\vec{\Omega }^{*} $ identifies the values of $C_{s_{i,j} }^{*} \left(z\right)$ in ${\left| \theta ^{*}  \right\rangle} $. 

The proof is concluded here.
\end{proof}

\subsection{Computational Pathway of the Optimal State of the Quantum Computer}
\begin{lemma}
The $C^{*} \left(z\right)$ computational pathway of the optimal quantum state ${| \vec{\theta }^{*} \rangle} $ can be determined for an arbitrary ${\rm {\mathcal K}}$. 
\end{lemma}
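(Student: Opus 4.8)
The plan is to read this Lemma as the synthesis of Theorems~1 and~2, because the computational pathway $C^{*}(z)$ is, by Definition~2 and \eqref{ZEqnNum456970}, nothing more than the sum $\sum_{\forall s_{i,j}\in S}C_{s_{i,j}}^{*}(z)=\sum_{\kappa_{i}^{*}}\Omega_{\kappa_{i}^{*}}^{*}$ assembled from the target edge set $\vec{\kappa}^{*}$ and the target objective-value vector $\vec{\Omega}^{*}$. Hence, once those two target vectors are in hand, the pathway is immediate, and the entire task reduces to exhibiting a kernel-independent route that produces them.

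First I would invoke Theorem~1 to obtain the target parameter vector $\vec{\theta}^{*}$ from the initial data $\vec{\theta}_{0}$, $f(\vec{\theta}_{0})$, and the prescribed target value $f(\vec{\theta}^{*})$ through the decomposition \eqref{ZEqnNum175619}. I would then feed this $\vec{\theta}^{*}$ into Theorem~2, whose conclusion is precisely that the stable pre-image $\Upsilon^{*}=(\vec{\kappa}^{*},\vec{\Omega}^{*})^{T}$ of \eqref{ZEqnNum216438} is recoverable from $\vec{\theta}^{*}$ via the regularized solution \eqref{ZEqnNum723849}. Reading off the two blocks of $\Upsilon^{*}$ yields $\vec{\kappa}^{*}$, the index pairs $(i,j)$ of the edges $s_{i,j}\in S$, and $\vec{\Omega}^{*}$, the associated component values $C_{s_{i,j}}^{*}(z)$; substituting these into \eqref{ZEqnNum456970} then gives $C^{*}(z)$ in closed form.

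The substantive content of the Lemma is the clause \emph{for an arbitrary} $\mathcal{K}$, and this is where I would concentrate the argument. The point is that nothing in the recovery of $\Upsilon^{*}$ commits to a particular kernel: the distance functional \eqref{ZEqnNum927192} that is minimized, its optimality criterion \eqref{32)}, and the gradient \eqref{ZEqnNum328288} are all expressed purely in terms of the kernel evaluations $\mathcal{K}(\Upsilon^{*},\Upsilon_{i})$ and the derivative $\mathcal{K}'$, so the fixed-point expression \eqref{ZEqnNum723849} remains well defined for any admissible $\mathcal{K}$. I would therefore argue that the composite map $\vec{\theta}^{*}\mapsto(\vec{\kappa}^{*},\vec{\Omega}^{*})\mapsto C^{*}(z)$ is kernel-agnostic, which is exactly the assertion of the Lemma.

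I expect the main obstacle to lie in the \emph{stability} of the pre-image step rather than in its mere existence. Since $\Gamma$ admits no inverse, recovering $\Upsilon^{*}$ from $\tau^{*}$ is genuinely ill-posed, so the argument must lean on the regularization parameter $\Phi$ in \eqref{ZEqnNum723849} to guarantee a unique, stable solution uniformly in $\mathcal{K}$. Concretely, I would verify that the approximation $\tau^{*}\approx\mathcal{P}(\tau_{0})$ within the span of the training maps, together with $\Phi>0$, keeps the denominator $\tau^{*}\mathcal{P}(\tau_{0})+\Phi$ bounded away from zero, so that the determination of the pathway degrades gracefully and never relies on invertibility of the chosen kernel.
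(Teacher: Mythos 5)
Your high-level reading of the Lemma --- as the synthesis of Theorems~1 and~2, with the only substantive content being that the recovery of $\Upsilon^{*}=(\vec{\kappa}^{*},\vec{\Omega}^{*})^{T}$ and hence of $C^{*}(z)=\sum_{\kappa_{i}^{*}}\Omega_{\kappa_{i}^{*}}^{*}$ is kernel-agnostic --- matches the paper's intent. But there is a genuine gap in how you close the argument. You claim that substituting $\vec{\theta}^{*}$ into \eqref{ZEqnNum723849} and then into \eqref{ZEqnNum456970} gives $C^{*}(z)$ ``in closed form.'' It does not: \eqref{ZEqnNum723849} is self-referential, since $\Upsilon^{*}$ appears on the right-hand side both inside the kernel evaluations $\mathcal{K}(\Upsilon^{*},\Upsilon_{i})$ and in the denominator through $\tau^{*}=\Gamma(\Upsilon^{*})$. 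The entire content of the paper's proof of this Lemma is the construction of the fixed-point iteration (Algorithm~2) that actually solves this implicit equation, i.e., $\Upsilon_{r}^{*}$ is computed from $\Upsilon_{r-1}^{*}$ for $r=1,\ldots,R$ and the output is $\Upsilon^{*}=\Upsilon_{R}^{*}$. You name the expression a ``fixed-point expression'' in passing but never propose the iteration, so your proof stops exactly where the paper's proof begins.

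The second, related omission concerns well-definedness of that iteration. You assert that $\Phi>0$ keeps the denominator $\tau^{*}\mathcal{P}(\tau_{0})+\Phi$ bounded away from zero, but $\tau^{*}\mathcal{P}(\tau_{0})$ is an inner product in $\mathcal{H}$ and can in principle be negative, so $\Phi>0$ alone does not suffice. The paper instead imposes the explicit conditions $\mathcal{P}(\tau_{0})\neq\vec{0}$ and $\tau^{*}\mathcal{P}(\tau_{0})>0$, and justifies that the latter persists by introducing the extremum $\varepsilon(\Upsilon^{*})$ at which $\nabla_{\varepsilon(\Upsilon^{*})}f_{d}=0$ and invoking the smoothness of $\mathcal{K}$ to work in a neighborhood of that extremum; the role of $\Phi$ in the paper is per-iteration stability of $\Upsilon_{r}^{*}$, not positivity of the denominator. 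To repair your proposal you would need to (i) state the iteration explicitly and argue it is well defined at every step for an arbitrary smooth $\mathcal{K}$, and (ii) replace the appeal to $\Phi$ with the positivity condition on $\tau^{*}\mathcal{P}(\tau_{0})$ and the extremum argument that supports it.
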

\begin{proof}
To construct an iteration method for the determination of ${| \vec{\theta }^{*} \rangle} $ via $\Upsilon ^{*} $, some preliminary conditions are set as follows. For the ${\rm {\mathcal P}}\left(\tau _{0} \right)$ projection, we set the condition 
\begin{equation} \label{40)} 
{\rm {\mathcal P}}\left(\tau _{0} \right)\ne \vec{0},                                                                   
\end{equation} 
therefore
\begin{equation} \label{ZEqnNum689683} 
\tau ^{*} {\rm {\mathcal P}}\left(\tau _{0} \right)>0.                                                                 
\end{equation} 
Then, let $\varepsilon \left(\Upsilon ^{*} \right)$ be the extremum of $\Upsilon ^{*} $ defined \cite{ref27,ref28} as
\begin{equation} \label{42)} 
\varepsilon \left(\Upsilon ^{*} \right)={\textstyle\frac{1}{\sum _{j}\sigma _{j}  }} \sum _{i}\Upsilon _{i} \sigma _{i}  ,                                                         
\end{equation} 
where
\begin{equation} \label{43)} 
\sigma _{i} =\ell _{i} {\rm {\mathcal K}}'\left(\varepsilon \left(\Upsilon ^{*} \right),\Upsilon _{i} \right).                                                       
\end{equation} 
The gradient with respect to $\varepsilon \left(\Upsilon ^{*} \right)$ is 
\begin{equation} \label{44)} 
\nabla _{\varepsilon \left(\Upsilon ^{*} \right)} \left(f_{d} \left(\Gamma \left(\varepsilon \left(\Upsilon ^{*} \right)\right),{\rm {\mathcal P}}\left(\tau _{0} \right)\right)\right)=0.                                         
\end{equation} 
As ${\rm {\mathcal K}}$ is smooth, it can be shown that the condition of \eqref{ZEqnNum689683} always holds, since there is a neighborhood of the extremum \cite{ref26,ref27} of $f_{d} \left(\Gamma \left(\varepsilon \left(\Upsilon ^{*} \right)\right),{\rm {\mathcal P}}\left(\tau _{0} \right)\right)$. 

To provide the stability of $\Upsilon _{i}^{*} $ in an $i$-th iteration step, we utilize the $\Phi $ regularization coefficient from \eqref{ZEqnNum927192} for the evaluation $\Upsilon _{i}^{*} $, and for the computation the $f_{d}^{\left(i\right)} \left(\cdot \right)$ is the distance function associated to an $i$-th iteration step.

The steps are given in Algorithm 2.\\ 

 \setcounter{algocf}{1}
\begin{algo}
  \DontPrintSemicolon
\caption{\textit{Computational pathway of the optimal state of the quantum computer.}}

\textbf{Step 1}. Select an arbitrary ${\rm {\mathcal K}}$. Define training set ${\rm {\mathcal S}}_{{\rm {\mathcal X}}} =\left\{\Upsilon _{1} ,\ldots ,\Upsilon _{N} \right\}$ of $N$ elements of ${\rm {\mathcal X}}$, and their maps in ${\rm {\mathcal H}}$ as set ${\rm {\mathcal S}}_{{\rm {\mathcal H}}} =\left\{\Gamma \left(\Upsilon _{1} \right),\ldots ,\Gamma \left(\Upsilon _{N} \right)\right\}=\left\{\tau _{1} ,\ldots ,\tau _{N} \right\}$.

\textbf{Step 2}. Set the $R$ iteration number. For an $r$-th iteration step, $r=1,\ldots ,R$, evaluate $\Upsilon _{r}^{*} $ as   
\[\Upsilon _{r}^{*} ={\textstyle\frac{1}{\tau _{r-1}^{*} {\rm {\mathcal P}}\left(\tau _{0} \right)+\Phi }} \sum _{i=1}^{N}\ell _{i} {\rm {\mathcal K}}\left(\Upsilon _{r-1}^{*} ,\Upsilon _{i} \right)\Upsilon _{i} +\Phi  \Upsilon _{0} ,\] 
where 
\[\tau _{r-1}^{*} =\Gamma \left(\Upsilon _{r-1}^{*} \right),\] 
while $\Upsilon _{r-1}^{*} $ is the solution determined in the $\left(r-1\right)$-th iteration step, and
\[\tau _{r-1}^{*} {\rm {\mathcal P}}\left(\tau _{0} \right)=\Gamma \left(\Upsilon _{r-1}^{*} \right){\rm {\mathcal P}}\left(\Gamma \left(\Upsilon _{0} \right)\right)=\sum _{i=1}^{N}\ell _{i} {\rm {\mathcal K}}\left(\Upsilon _{r-1}^{*} ,\Upsilon _{i} \right) .\] 

\textbf{Step 3}. Compute $f_{d}^{\left(r\right)} \left(\tau _{r}^{*} ,{\rm {\mathcal P}}\left(\tau _{0} \right)\right)$ via \eqref{ZEqnNum927192} as
\[\begin{split}
   f_{d}^{\left( r \right)}\left( \tau _{r}^{*},\mathcal{P}\left( {{\tau }_{0}} \right) \right)=&\mathcal{K}\left( \Upsilon _{r}^{*},\Upsilon _{r}^{*} \right) \\ 
 & -2\sum\limits_{i=1}^{N}{{{\ell }_{i}}}\mathcal{K}\left( \Upsilon _{r}^{*},{{\Upsilon }_{i}} \right)\\&+\Phi \left( {{\left( \Upsilon _{r}^{*} \right)}^{T}}\Upsilon _{r}^{*}-2\Upsilon _{r}^{*}{{\Upsilon }_{0}} \right).  
\end{split}\]

\textbf{Step 4}. Repeat steps 2-3 for all $r$. 

\textbf{Step 5}. Output $\Upsilon ^{*} $ \eqref{ZEqnNum216438} as
\[\Upsilon ^{*} =\Upsilon _{R}^{*} ={\textstyle\frac{1}{\tau _{R-1}^{*} {\rm {\mathcal P}}\left(\tau _{0} \right)+\Phi }} \sum _{i=1}^{N}\ell _{i} {\rm {\mathcal K}}\left(\Upsilon _{R-1}^{*} ,\Upsilon _{i} \right)\Upsilon _{i} +\Phi  \Upsilon _{0} .\] 

\textbf{Step 6}. Set the connectivity of $C^{*} \left(z\right)$ in ${| \vec{\theta }^{*} \rangle} $ via $\Upsilon ^{*} $.

\end{algo}

\end{proof}

\section{Conclusions}
\label{sec4}
Gate-model quantum computers represent an implementable way for near-term experimental quantum computations. The resolution of a computational problem fed into a quantum computer can be modeled via reaching the target value of an objective function. The objective function is determined by the actual computational problem. To satisfy the target objective function value, a quantum computer must reach a target system state. In the target system state, the gate parameters of the unitaries pick up values that set the objective function into the target value. Finding the target system state is a challenge that requires several rounds of measurement and system state preparations via the quantum computer. Here, we proved that the target state of the quantum computer can be evaluated from an initial system state and an initial objective function. The solution significantly reduces the cost of objective function evaluation, since the proposed method requires no the preparation of intermediate system states via the quantum computer between the initial and target system states. We defined a method for the evaluation of the computational path of the quantum computer for the target state, and an algorithm to solve the computational path problem in an iterative manner.

%\section*{Statements}
%\subsection*{Ethics statement}
%This work did not involve any active collection of human data.
%\subsection*{Data accessibility statement}
%This work does not have any experimental data.
%\subsection*{Competing non-financial interests statement}
%We have no competing non-financial interests.
%\subsection*{Competing financial interests statement}
%We have no competing financial interests.
%\subsection*{Competing interests statement}
%We have no competing interests.
%\subsection*{Funding}
%No relevant funding. 
%\subsection*{Authors’ contributions}
%L.GY. designed the protocol and wrote the manuscript. L.GY. analyzed the results. All authors reviewed the manuscript.
%\subsection*{Original Article Statement}
%a. This paper is a completely novel and completely independent submission, has no any connection with any previously submitted papers.
%b. This manuscript is the authors' original work and has not been published nor has it been submitted simultaneously elsewhere. 
%c. All authors have checked the manuscript and have agreed to the submission. 

\section*{Acknowledgements}
The research reported in this paper has been supported by the Hungarian Academy of Sciences (MTA Premium Postdoctoral Research Program 2019), by the National Research, Development and Innovation Fund (TUDFO/51757/2019-ITM, Thematic Excellence Program), by the National Research Development and Innovation Office of Hungary (Project No. 2017-1.2.1-NKP-2017-00001), by the Hungarian Scientific Research Fund - OTKA K-112125 and in part by the BME Artificial Intelligence FIKP grant of EMMI (Budapest University of Technology, BME FIKP-MI/SC).

\newpage
\onecolumn
\appendix
\setcounter{table}{0}
\setcounter{figure}{0}
\setcounter{equation}{0}
\setcounter{algocf}{0}
\renewcommand{\thetable}{\Alph{section}.\arabic{table}}
\renewcommand{\thefigure}{\Alph{section}.\arabic{figure}}
\renewcommand{\theequation}{\Alph{section}.\arabic{equation}}
\renewcommand{\thealgocf}{\Alph{section}.\arabic{algocf}}

\setlength{\arrayrulewidth}{0.1mm}
\setlength{\tabcolsep}{5pt}
\renewcommand{\arraystretch}{1.5}
\section{Appendix}

\subsection{Abbreviations}
\begin{description}
\item[NISQ] Noisy Intermediate-Scale Quantum
\item[QAOA] Quantum Approximate Optimization Algorithm
\item[RKHS] Reproducing Kernel Hilbert Space
\end{description}

\subsection{Notations}
The notations of the manuscript are summarized in \tref{tab2}.
\begin{center}
\begin{longtable}{||l|p{4.7in}||}
\caption{Summary of notations.}
\label{tab2}
\endfirsthead
\endhead
\hline
\textit{Notation} & \textit{Description} \\ \hline
$QG$ & Quantum gate structure of a gate-model quantum computer. \\ \hline 
$U_{i} \left(\theta _{i} \right)$ & An $i$-th unitary gate, $U_{i} \left(\theta _{i} \right)=\exp \left(-i\theta _{i} P_{i}\right)$, where $P_{i}$ is a generalized Pauli operator formulated by a tensor product of Pauli operators $\left\{X,Y,Z\right\}$, while $\theta _{i} $ is referred to as the gate parameter associated to $U_{i} \left(\theta _{i} \right)$. \\ \hline 
${| \vec{\theta } \rangle} $ & System state of the quantum computer, ${| \vec{\theta } \rangle} =U_{L} \left(\theta _{L} \right)U_{L-1} \left(\theta _{L-1} \right)\ldots U_{1} \left(\theta _{1} \right)$, where $U_{i} \left(\theta _{i} \right)$ identifies an $i$-th unitary gate. \\ \hline 
$\vec{\theta }$ & Gate parameter vector, a collection of gate parameters of the $L$ unitaries, $\vec{\theta }=\theta _{L} ,\theta _{L-1} ,\ldots ,\theta _{1} $. \\ \hline 
$C\left(z\right)$ & Objective function of a computational problem fed into the quantum computer. It identifies the computational pathway (connectivity of the objective function as) $C\left(z\right)=\sum _{ij\in QG}C_{ij} \left(z\right) $, where $C_{ij} \left(z\right)$ is evaluated between quantum states $ij$ in the $QG$ structure of the gate-model quantum computer. \\ \hline 
$C^{*} \left(z\right)$ & Computational pathway in the target state ${| \vec{\theta }^{*} \rangle} $. \\ \hline 
$z$ & A bitstring. \\ \hline 
$f(\vec{\theta })$ & Objective function. \\ \hline 
$f^{*} (\vec{\theta })$ & A target objective function value. \\ \hline 
${| \vec{\theta }_{0} \rangle} $ & Initial system state of the quantum computer. \\ \hline 
${| \vec{\theta }^{*} \rangle} $ & Target system state of the quantum computer subject to be determined that achieves $f^{*} (\vec{\theta })$. \\ \hline 
$\chi $ & Vector of regression coefficients. \\ \hline 
$\vec{\theta }_{0} $ & Collection of gate parameters in the ${| \vec{\theta }_{0} \rangle} $ initial system state. \\ \hline 
$\vec{\theta }^{*} $ & Collection of gate parameters in the ${| \vec{\theta }^{*} \rangle} $ target system state. \\ \hline 
$F(\vec{\theta }_{0})$ & Component of $\vec{\theta }_{0} $. \\ \hline 
$F\left(U\right)$ & Fixed component for an arbitrary $\vec{\theta }$. \\ \hline 
$+$ & Moore--Penrose pseudoinverse. \\ \hline 
${\rm {\mathcal G}}$ & Connectivity graph, ${\rm {\mathcal G}}=\left(V,S\right)$, with a set $V$ of vertexes, and a set $S$ of arcs. \\ \hline 
$V$ & Set of vertexes in ${\rm {\mathcal G}}$. \\ \hline 
$S$ & Set of arcs in ${\rm {\mathcal G}}$. \\ \hline 
$v$ & A vertex of $V$ the ${\rm {\mathcal G}}$ environmental graph. \\ \hline 
$s_{i,j} $ & Edge $s_{i,j} $ with index pair $\left(i,j\right)$, it identifies a connection between nodes $v_{i} $ and $v_{j} $.  \\ \hline 
${\rm {\mathcal X}}$ & Input space. \\ \hline 
${\rm {\mathcal K}}$ & Kernel machine. \\ \hline 
${\rm {\mathcal H}}$ & Reproducing Kernel Hilbert Space (RKHS) associated with the kernel machine ${\rm {\mathcal K}}$. \\ \hline 
$\Gamma $ & A nonlinear map, $\Gamma :{\rm {\mathcal X}}\to {\rm {\mathcal H}}$, from ${\rm {\mathcal X}}$ to the high-dimensional Hilbert space ${\rm {\mathcal H}}$ associated with ${\rm {\mathcal K}}$. \\ \hline 
$\vec{\kappa }$ & Vector of initial $s_{i,j} $ edges in the ${\rm {\mathcal G}}$ connectivity graph. \\ \hline 
$\vec{\Omega }$ & Vector of the actual $C_{s_{i,j} } \left(z\right)$ objective function values, associated to the $s_{i,j} $ edges in the ${\rm {\mathcal G}}$ connectivity graph. \\ \hline 
$\kappa _{i} $ & An $i$-th element of $\vec{\kappa }$. \\ \hline 
$\Omega _{\kappa _{i} } $ & An $i$-th element of $\vec{\Omega }$. \\ \hline 
$\Upsilon _{0} $ & Initial element in the input space ${\rm {\mathcal X}}$, defined as $\Upsilon _{0} =(\vec{\kappa },\vec{\Omega })^{T} $. \\ \hline 
$\tau _{0} $ & Map of $\Upsilon _{0} $ in ${\rm {\mathcal H}}$, $\tau _{0} =\Gamma \left(\Upsilon _{0} \right)=\lambda \vec{\theta }_{0} $, where $\lambda $ is a matrix of eigenvectors associated with the edge and objective function values in ${| \vec{\theta }_{0} \rangle} $. \\ \hline 
$\Upsilon ^{*} $ & Target element in ${\rm {\mathcal X}}$ subject to be determined, $\Upsilon ^{*} =(\vec{\kappa }^{*} ,\vec{\Omega }^{*})^{T} $. \\ \hline 
$\vec{\kappa }^{*} $ & Target vector, identifies the connectivity of the $C_{s_{i,j} }^{*} \left(z\right)$ objective function values in the target state ${| \vec{\theta }^{*} \rangle} $. \\ \hline 
$\vec{\Omega }^{*} $ & Target vector, identifies the connectivity of the $C_{s_{i,j} }^{*} \left(z\right)$ objective function values in the target state ${| \vec{\theta }^{*} \rangle} $. \\ \hline 
$\kappa _{i}^{*} $ & An $i$-th element of  $\vec{\kappa }^{*} $ and $\vec{\Omega }^{*} $. \\ \hline 
$\Omega _{\kappa _{i}^{*} }^{*} $ & An $i$-th element of $\vec{\Omega }^{*} $. \\ \hline 
$\tau ^{*} $ & Map of the target $\Upsilon ^{*} \in {\rm {\mathcal X}}$, $\tau ^{*} =\Gamma \left(\Upsilon ^{*} \right)=\lambda ^{*} \vec{\theta }^{*} $, where $\lambda ^{*} $ is a matrix of eigenvectors associated with the edge and objective function values in state ${| \vec{\theta }^{*} \rangle} $. \\ \hline 
${\rm {\mathcal P}}$ & Projector in ${\rm {\mathcal H}}$.  \\ \hline 
$\Upsilon _{0} $ & Initial element in ${\rm {\mathcal X}}$. \\ \hline 
$\Upsilon _{i} $ & Training data in ${\rm {\mathcal X}}$. \\ \hline 
$\Upsilon ^{*} $ & Target element in ${\rm {\mathcal X}}$. \\ \hline 
$V $ & An eigenvector. \\ \hline 
$\beta _{i} $ & Projections in ${\rm {\mathcal H}}$, $\beta _{i} =\sum _{j=1}^{N}\alpha _{j}^{i} {\rm {\mathcal K}}\left(\Upsilon ^{*} ,\Upsilon _{j} \right) $, where $\alpha _{i} $ is an $i$-th coefficient in the eigenvector $V$, $V=\sum _{i=1}^{N}\alpha _{i} \tau _{i}  $, where $\tau _{i} $ is the map of training data $\Upsilon _{i} $, $\tau _{i} =\Gamma \left(\Upsilon _{i} \right)$. \\ \hline 
$f_{d} \left(x,y\right)$ & Distance function in ${\rm {\mathcal H}}$, $f_{d} \left(x,y\right)=\left\| x-y\right\| ^{2} $. \\ \hline 
$\Phi $ & A non-negative regularization parameter. \\ \hline 
$\zeta $ & Terms independent of $\Upsilon ^{*} $. \\ \hline 
$\ell _{i} $ & Parameter, $\ell _{i} =\sum _{k=1}^{n}\beta _{k} \alpha _{i}^{k}  $, where $n$ is associated to the projection ${\rm {\mathcal P}}\left(\tau _{0} \right)$. \\ \hline 
$\varepsilon \left(\Upsilon ^{*} \right)$ & Extremum of $\Upsilon ^{*} $, $\varepsilon \left(\Upsilon ^{*} \right)={\textstyle\frac{1}{\sum _{j}\sigma _{j}  }} \sum _{i}\Upsilon _{i} \sigma _{i}  $,  where $\sigma _{i} =\ell _{i} {\rm {\mathcal K}}'\left(\varepsilon \left(\Upsilon ^{*} \right),\Upsilon _{i} \right)$. \\ \hline 
$\nabla _{\varepsilon \left(\Upsilon ^{*} \right)} \left(f_{d} \left(\cdot \right)\right)$ & Gradient with respect to $\varepsilon \left(\Upsilon ^{*} \right)$. \\ \hline 
$f_{d}^{\left(i\right)} \left(\cdot \right)$ & Distance function associated to an $i$-th iteration step. \\ \hline 
${\rm {\mathcal S}}_{{\rm {\mathcal X}}} $ & Training set of $N$ training data in ${\rm {\mathcal X}}$, ${\rm {\mathcal S}}_{{\rm {\mathcal X}}} =\left\{\Upsilon _{1} ,\ldots ,\Upsilon _{N} \right\}$. \\ \hline 
${\rm {\mathcal S}}_{{\rm {\mathcal H}}} $ & Set of maps of the training data in ${\rm {\mathcal H}}$,  ${\rm {\mathcal S}}_{{\rm {\mathcal H}}} =\left\{\Gamma \left(\Upsilon _{1} \right),\ldots ,\Gamma \left(\Upsilon _{N} \right)\right\}=\left\{\tau _{1} ,\ldots ,\tau _{N} \right\}$. \\ \hline 
$R$ & Iteration number. \\ \hline 
$\Upsilon _{r}^{*} $ & Target value $\Upsilon ^{*} $ associated with an $r$-th iteration step, $r=1,\ldots ,R$. \\ \hline 
$\Upsilon _{R}^{*} $ & Solution determined in the $R$-th iteration step, $\tau _{R}^{*} =\Gamma \left(\Upsilon _{R}^{*} \right)$. \\ \hline
\end{longtable}
\end{center}
\end{document}